\newtheorem{lemma}{Lemma}
\begin{document}

\preprint{AIP/123-QED}

\title[Swarm behavior of traders]{Swarm behavior of traders \\with different subjective predictions in the Market
}

\author{Hiroshi Toyoizumi}
\affiliation{ Graduate School of Accountancy, Waseda University.
}
 \email{toyoizumi@waseda.jp}
\altaffiliation[Also at ]{Dept. of Applied Mathematics, Waseda University.}


\date{\today}

\begin{abstract}
A combination of a priority queueing model and mean field theory shows the emergence of traders' swarm behavior, even when each has a subjective prediction of the market driven by a limit order book.  Using a nonlinear Markov model, we analyze the dynamics of traders who select a favorable order price taking into account the waiting cost incurred by others.  We find swarm behavior emerges because of the delay in trader reactions to the market, and the direction of the swarm is decided by the current market position and the intensity of zero-intelligent random behavior, rather than subjective trader predictions.
\end{abstract}

\pacs{Valid PACS appear here}
\keywords{limit order book, swarm behavior, delay, queueing model, mean field game, non-linear Markov process.}
\maketitle

Stock markets use a double auction system with a limit order book (LOB) \cite{doi:10.1080/14697688.2013.803148}, where traders place their sell and buy orders with a specific price (limit orders), as well as their orders with no specific price (market  orders), which executed against the most favorable limit orders.  Because the LOB is the key micro-structure mechanism of the stock market, there are numerous mathematical and empirical analyses of the LOB, which mainly try to understand the observed order distribution on the LOB and the price formation process based on some assumptions of trader behavior \citep{Smith:2003rt,Ichiki:2015aa,doi:10.1080/14697688.2013.803148,Rosu01112009,doi:10.1080/01621459.2014.982278,PhysRevE.64.056136} and possibly to derive the optimal order-placing strategy \citep{maglaras2015optimal,Toyoizumi:2016xe}.  Some of them used Markov processes including queueing model to describe the behavior of the LOB \cite{Cont:2010uq,kelly2015markov,doi:10.1080/14697688.2014.963654}.  At the same time, as Keynes famous quote about stock markets as a beauty contest \cite{keynes1937general}, we need to take into account the higher-order expectation of other traders' belief (the expectation of what others predict the belief of others) to understand the market dynamics \cite{Allen:2006fk}.

Here, we use the concept of mean field game theory \cite{Carmona:2013fk,carmona2015, Lasry:qy, lachapelle2013efficiency, gueant2011mean,PhysRevE.64.056136} incorporating the higher-order belief of traders as Picard iteration (\cite{Dobrushkin:2014rt} for example), and analyze the behavior of traders who act to maximize their rewards by placing their orders in the LOB modeled by a priority queue \citep{hassin2003queue}.  Specifically, we study the effect of the delay of traders’ reactions with a non-linear Markov process \cite{kolokoltsov2010nonlinear}, which leads to the emergence of swarm behavior, and we show that the swarm depends on the current market position and the intensity of random behavior rather than subjective trader predictions.

\begin{figure}[tbp]
\begin{center}
\includegraphics[width=0.5\textwidth]{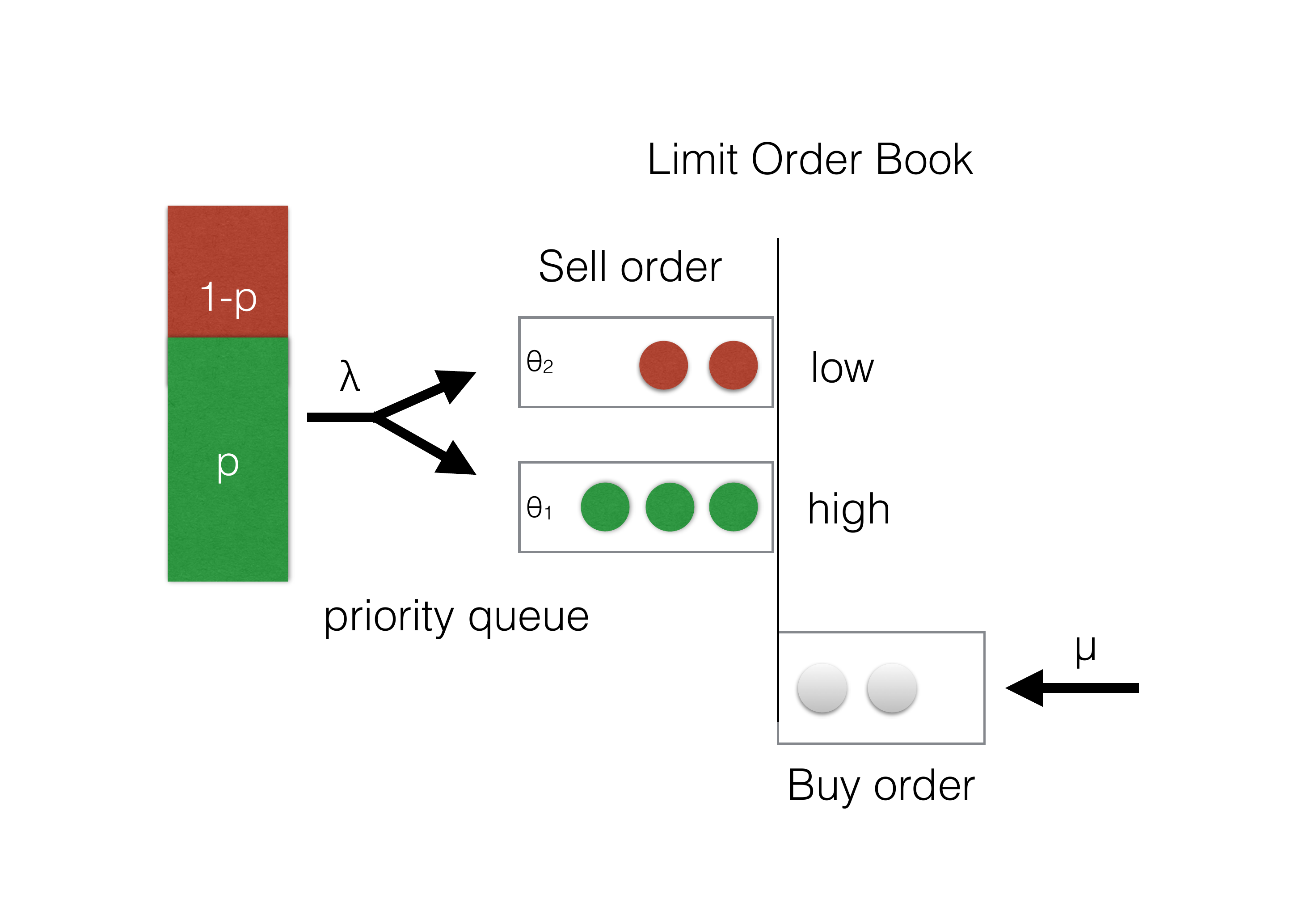}
\caption{A priority queue modelling  of a one-sided LOB. Traders select the prices of their sell orders ($\theta_{1} < \theta_{2}$).  The $\theta_{1}$ queue is executed with the high priority.  Sell and Buy orders arrive as independent Poisson processes with the rate $\lambda$ and $\mu$, and the ratio of $\theta_{1}$-sell orders is $p$. }
\label{fig:priorityQueue.pdf}
\end{center}
\end{figure}

A simplified one-sided LOB \cite{doi:10.1080/14697688.2014.963654} can be modeled by a priority queue \citep{hassin2003queue} (see Figure \ref{fig:priorityQueue.pdf}).  Assume that there is a population of traders that wants to sell stocks.  Traders place their sell orders selecting the prices either $\theta_{1}$ or $\theta_{2}$ ($\Delta \theta = \theta_{2} - \theta_{1} >0$, generally $\Delta \theta$ is fixed to a small value) in the LOB.  The sell orders with different prices will be put into two queues, and wait to be executed with a matching market buy order. The lower-price queue denoted by $\theta_{1}$ is given the priority, and each queue is served on a first-come-first-serve basis.  

We assume that limit sell orders and market buy orders arrive at the LOB as independent Poisson processes with the rate  $\lambda$ and the rate $\mu$, respectively.  Let $p$ be the ratio of $\theta_{1}$ sell orders.   An arriving market buy order will be cleared with a limit sell order at the head of the priority $\theta_{1}$ queue, if there is one.  If there is no sell order in the $\theta_{1}$ queue, the buy order is executed with the sell order at the head of $\theta_{2}$.  When there is no order in either queue, the buy order will be canceled.  We consider a buy-dominant market, which implies $\rho = \lambda/\mu <1$.  Thus, all sell orders will be executed eventually.

When they place a sell order, traders consider its cost for waiting time until execution, which is denoted by $c$ per unit time, and their reward $R_{i}$ when selecting the price $\theta_{i}$ is: 
\begin{align}
R_{i} = \theta_{i} - c W_{i}, \quad i=1,2,
\end{align}
where $W_{i}$ is the waiting time until the $\theta_{i}$ order is executed.   By placing a $\theta_{1}$ order, traders expect their orders to be sold more quickly than $\theta_{2}$ orders and reduce the waiting cost, at the expense of selling them at a lower price.

Traders optimize their behavior according to the waiting cost incurred by the behavior of other traders.  Here are some intuitions: (1) a trader should select $\theta_{1}$ if many others select it, or his or her order will be delayed by $\theta_{1}$ orders, however (2) when others do not select $\theta_{1}$, his or her order will not be severely-delayed and selecting $\theta_{2}$ is better.   Thus, the decision should be affected by cost structure as well as the prediction of other traders' decisions. 

Consider a trader Alice who estimates the waiting cost to sell her stock over $[0,T]$ to make the decision at time $0$.  Let $p$ be the ratio of traders selecting $\theta_{1}$ in the market.  Because orders are executed rapidly, Alice can assume her orders to enter the priority queue of the LOB in the stationary state.  By the conservation law of the workload of priority queues (\citep{hassin2003queue} and Supporting Material \ref{Expected Waiting Times}),  we have the following estimate for the expected waiting times:
\begin{align}
E[W_{1}]&=\frac{1}{(\mu - \lambda p)},\\
E[W_{2}]&= \frac{\mu}{(\mu-\lambda)(\mu-\lambda p)},
\end{align}
in the stationary priority queue.   If many traders select the higher priority ($\theta_{1}$) orders (large $p$), then the waiting times to be executed become large in the both queues, but the orders with the lower priority ($\theta_{2}$ orders) suffer more severely, since $\rho = \lambda/\mu <1$.

Define the value $g$ as the expected gain from selecting the lower price $\theta_{1}$:
\begin{align}\label{eq:gain of 1}
g(p,c) &= E[R_{1}] - E[R_{2}]\\
&= (\theta_{1}-cE[W_{1}])-(\theta_{2}-cE[W_{2}])\\
&= \frac{\rho c/\mu}{(1-\rho)(1-\rho p)} - \Delta \theta,
\end{align}
which is an increasing function of $p$.  Note that the $g$-value is negative when $c$ is small, while $g$ is positive when $p$ and $\rho$ are large.

Figure \ref{fig: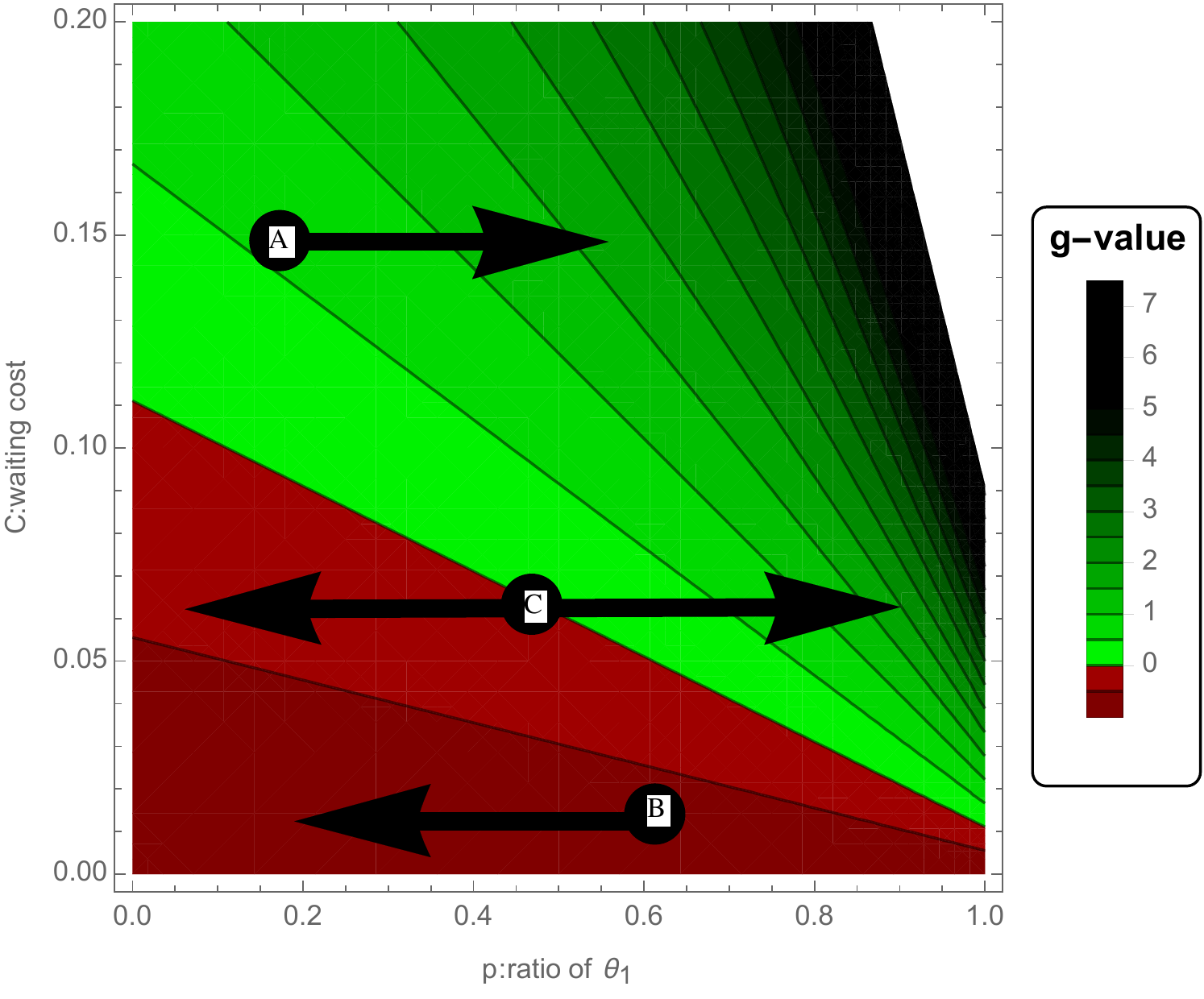} illustrates the typical $g$-value on the $(p,c)$-plane in a slightly buy-dominant market ($\rho = 0.9$), which shows that the traders will change their behavior according to the behavior of other traders.  At the point $A$ in the $(p,c)$-plane, the cost of waiting is large (positive $g$-value) and Alice selects $\theta_{1}$ to increase her reward.  On the other hand, at the point $B$, the cost of waiting is negligible and Alice gains more reward by selecting $\theta_{2}$.  The point $C$ is on the boundary $g(p,c)=0$, and Alice's selection will not affect her reward.  This boundary is an unstable Nash-equilibrium \citep{hassin2003queue}.  

\begin{figure}[htbp]
\begin{center}
\includegraphics[width=0.5\textwidth]{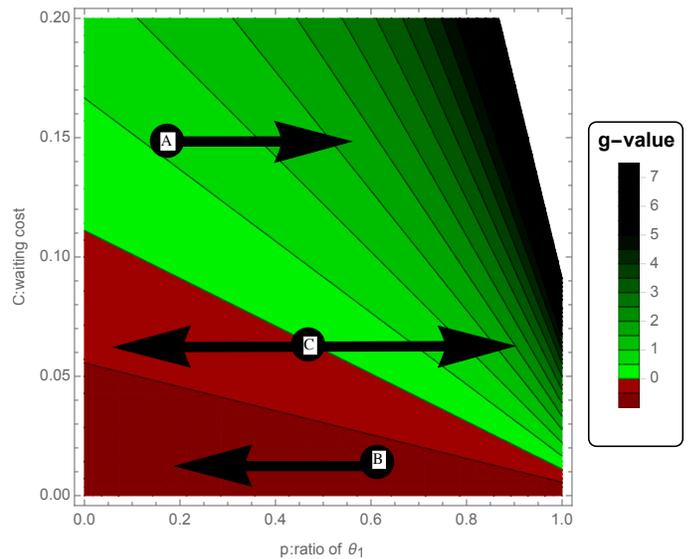}
\caption{The g-value of $g(p,c)$, which is the expected gain to select $\theta_{1}$.  Here we set $\Delta \theta =1$ and $\rho = 0.9$.  The line $\{p=0, g<0\}$ on the red region and $\{p=1,g>0\}$ on the green region are both stable equilibrium, while the boundary $g(p,c)=0$ is unstable.}
\label{fig:Contour_g.pdf}
\end{center}
\end{figure}

Thus, Alice can decide her strategy according to the $g$-value given other traders' behavior $p$ and the cost structure $c$.  However, in general, she should predict other traders' behavior $p$ in the future market, because it will take some time for her to implement her strategy.  

Suppose that at time $0$ Alice has an arbitrary initial prediction of the future market ratio of selecting $\theta_{1}$ as $x^{a}=(x^{a}(t))_{t\in [0,T]}$ given the known current ratio $x^{a}(0)=p_{0}$.  Now Alice picks a trader Bob and analyzes his behavior through  a thought experiment assuming the market (traders other than Bob) follows her prediction.   Alice also assumes that Bob changes his behavior according to the two-state non-homogeneous Markov process with the infinitesimal generator:
\begin{align}
Q[x^{a}(t)] = \left(\begin{array}{cc}
-\alpha_{1}(x^{a}(t)) & \alpha_{1}(x^{a}(t)) \\
 \alpha_{2}(x^{a}(t)) & -\alpha_{2}(x^{a}(t))
\end{array}\right)
\end{align}
and the master (Kolmogorov) differential equation:
\begin{align}\label{eq:ODE}
\frac{dx_{1}^{a}(t)}{dt}=-\alpha_{1}(x^{a}(t))x_{1}^{a}(t)+\alpha_{2}(x^{a}(t))(1-x_{1}^{a}(t)),
\end{align}
where $x_{1}^{a}(t)= P\{\text{Bob selects $\theta_{1}$ at time $t$}\}$, and $\alpha_{1}(p)$ and $\alpha_{2}(p)$ are functions of Alice's prediction of market $p=x^{a}(t)$ and defined by 
\begin{align}\label{eq:alpha}
\alpha_{1}(p)= \beta + \alpha g^{-}(p,c), \\
\alpha_{2}(p) = \beta + \alpha g^{+}(p,c).
\end{align}
Here, $g^{+} (p,c) = \max(0,g(p,c))$ and $g^{-} (p,c) = -\min(0,g(p,c))$, and $\alpha$ and $\beta$ are some non-negative constants.  The instantaneous transition rate $\alpha_{1}(p)$ and $\alpha_{2}(p)$ are interpreted as follows.   Under the market $p$, Bob changes his selection with the rate proportional to the gain obtained if he changed his selection.  The term $\beta\geq 0$ represents the possibility of random behavior change irrelevant to his expected gain (zero-intelligence behavior \cite{doi:10.1080/14697688.2013.803148,Toyoizumi:2016xe}).  Using this Markov chain model, Alice takes into account that Bob changes his strategy quickly if he gains more but he always needs time to adapt to the market change. 

Now Alice expects that Bob select his order price according to the probability $x^{a}_{1}$.  Because other traders and the market as a whole change their strategies just as Bob does, she must change her prediction of the market to $x^{a}_{1}$ from $x^{a}$.  With her new market prediction $x^{a}_{1}$, Alice calibrates her prediction about Bob's behavior again based on the marker $x^{a}_{1}$, and then she obtains a newer prediction $x^{a}_{2}$ of Bob's behavior.  She can repeat this procedure and get her $n$-th prediction $x^{a}_{n}$ based on the market $x_{n-1}^{a}$ satisfying
\begin{align}\label{eq:ODE Picard Iteration}
\frac{dx_{n}^{a}(t)}{dt}=-\alpha_{1}(x_{n-1}^{a}(t))x_{n}^{a}(t)+\alpha_{2}(x_{n-1}^{a}(t))(1-x_{n}^{a}(t)),
\end{align}
with the known initial condition $x_{n}^{a}(0)=p_{0}$.  Using the argument similar to Picard iteration (\cite{Dobrushkin:2014rt} for example), we can show that 
\begin{align}\label{eq:the limit}
x^{a}_{n}(t) \to x(t),
\end{align}
uniformly on the finite interval $[0,T]$ as $n \to \infty$ for an unique smooth function $x$ (see Supplemental Material \ref{Picard-like Iteration of Market Prediction} for the detail).  Taking $n\to\infty$ in \eqref{eq:ODE Picard Iteration}, we can show that $x$ satisfies
\begin{align}\label{eq:ODEofFixedPoint}
\frac{dx(t)}{dt}=-\alpha_{1}(x(t))x(t)+\alpha_{2}(x(t))(1-x(t)).
\end{align}
with the initial condition $x(0)=p_{0}$, which is a two-state nonlinear Markov process \cite{kolokoltsov2010nonlinear}.

Generally, nonlinear differential equations such as \eqref{eq:ODEofFixedPoint} does not necessarily have a unique solution.  However, the limit $x$ is irrelevant to Alice's initial expectation $x^{a}$, but only depends on the current market position $p_{0}$ because of the Lipschitz continuity of $\alpha_{1}$ and $\alpha_{2}$ (see Supplemental Material).  Further, another trader Charlie, who may have a different initial prediction $x^{c}$ in the market, yet shares the same information about the current market $p_{0}$, must reach the same conclusion as Alice (this is the reason we dropped the superscript $a$ of $x$ in \eqref{eq:the limit}).  Thus, Alice, Bob, Charlie and all other traders reach the same limit prospect $x$ of the future market, and then all traders act similarly according to their own interests based on the future market $x$, and a swarm behavior of traders emerges, even though each has the different initial subjective predictions on the market. 

Figure \ref{fig:PicardIterationDecreasing} - \ref{fig: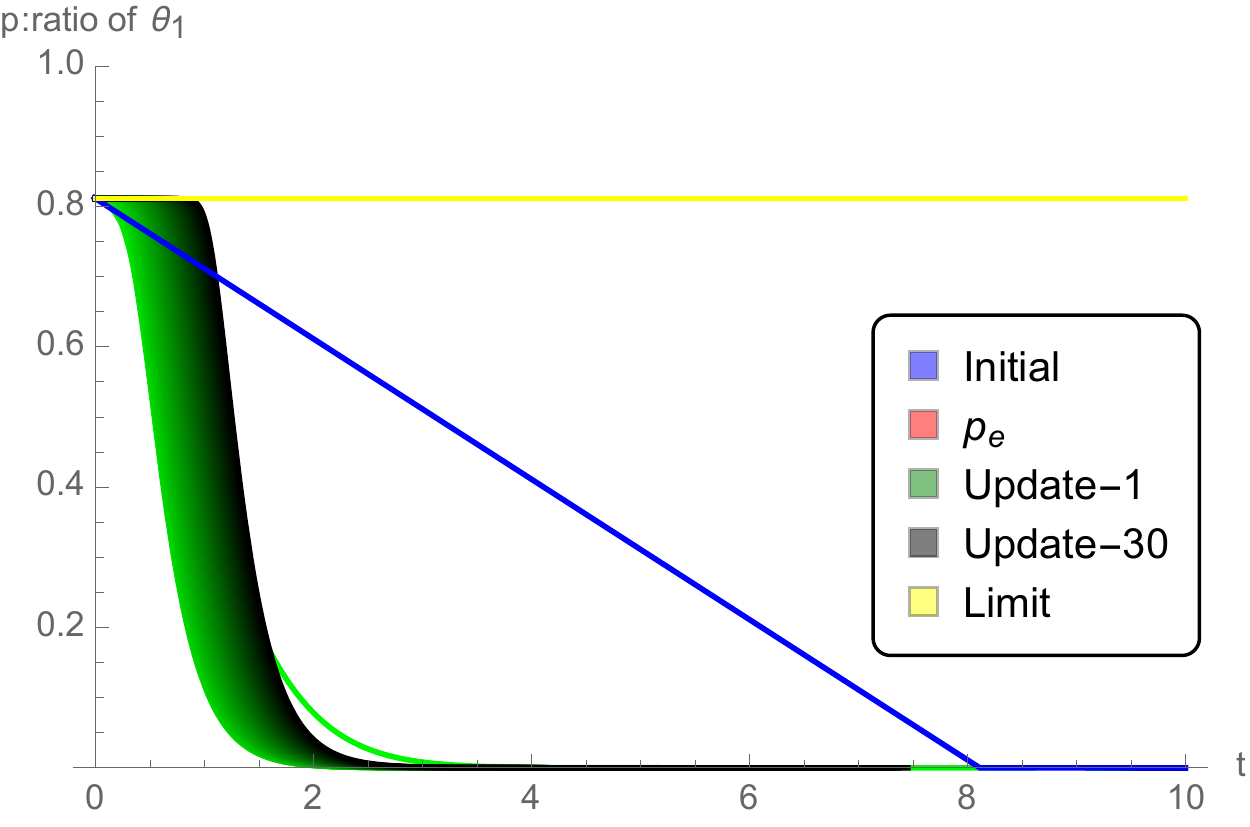} illustrate examples how the limit behavior achieved by updating traders' expectations in the case when the random switching effect is small ($\beta=0.1$).  There, $p_{e}$ is the critical ratio of the market  that satisfies $g(p,c)=0$.   As seen in Figure \ref{fig:Contour_g.pdf}, if $p$ is greater than $p_{e}$, it is better to select the lower price $\theta_{1}$.    We only show the result when there is a non-stable equilibrium $p_{e}$ in the market $(0<p_{e}<1)$, as otherwise trader's selections are obvious.  All examples show the convergence of updates $x_{n}$ to the limit $x$, and thus swarm behaviors emerge.

\begin{figure}[tbp]
\begin{center}
\includegraphics[width=0.45\textwidth]{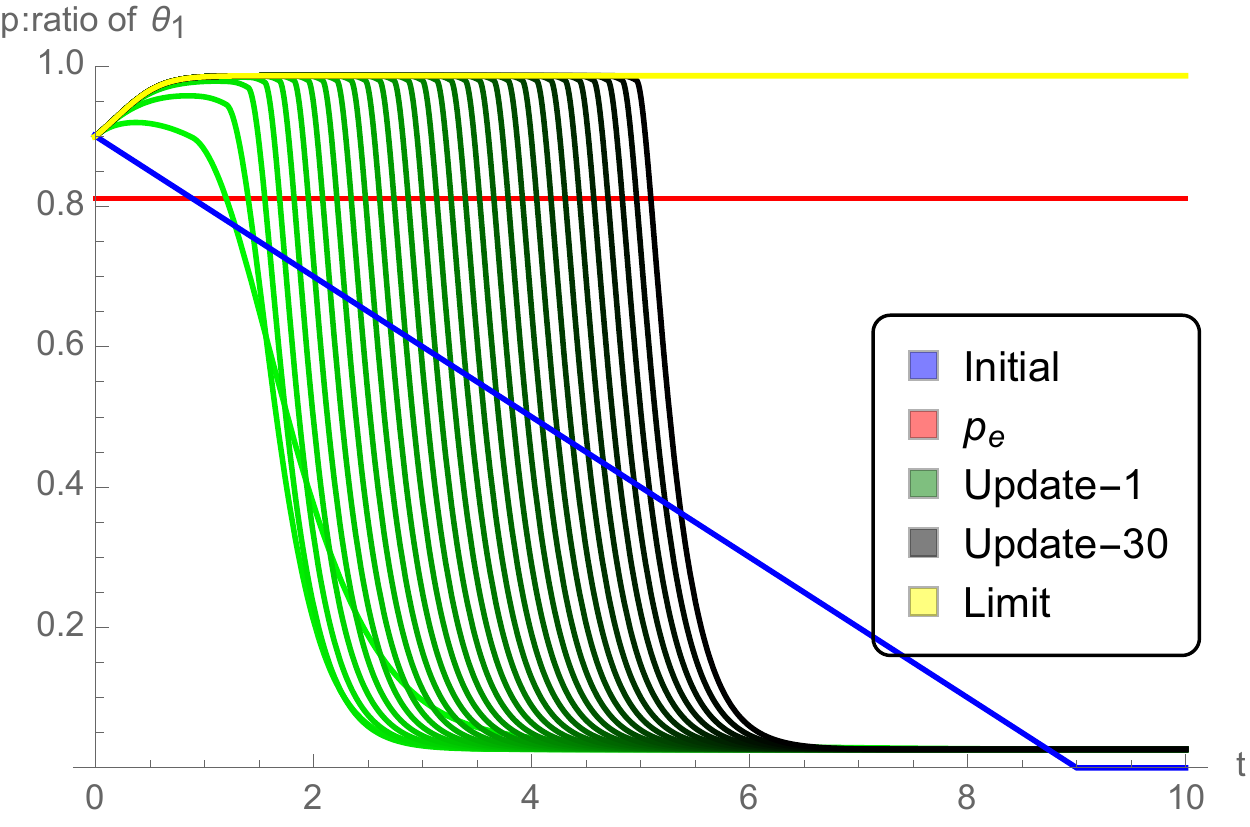}
\includegraphics[width=0.45\textwidth]{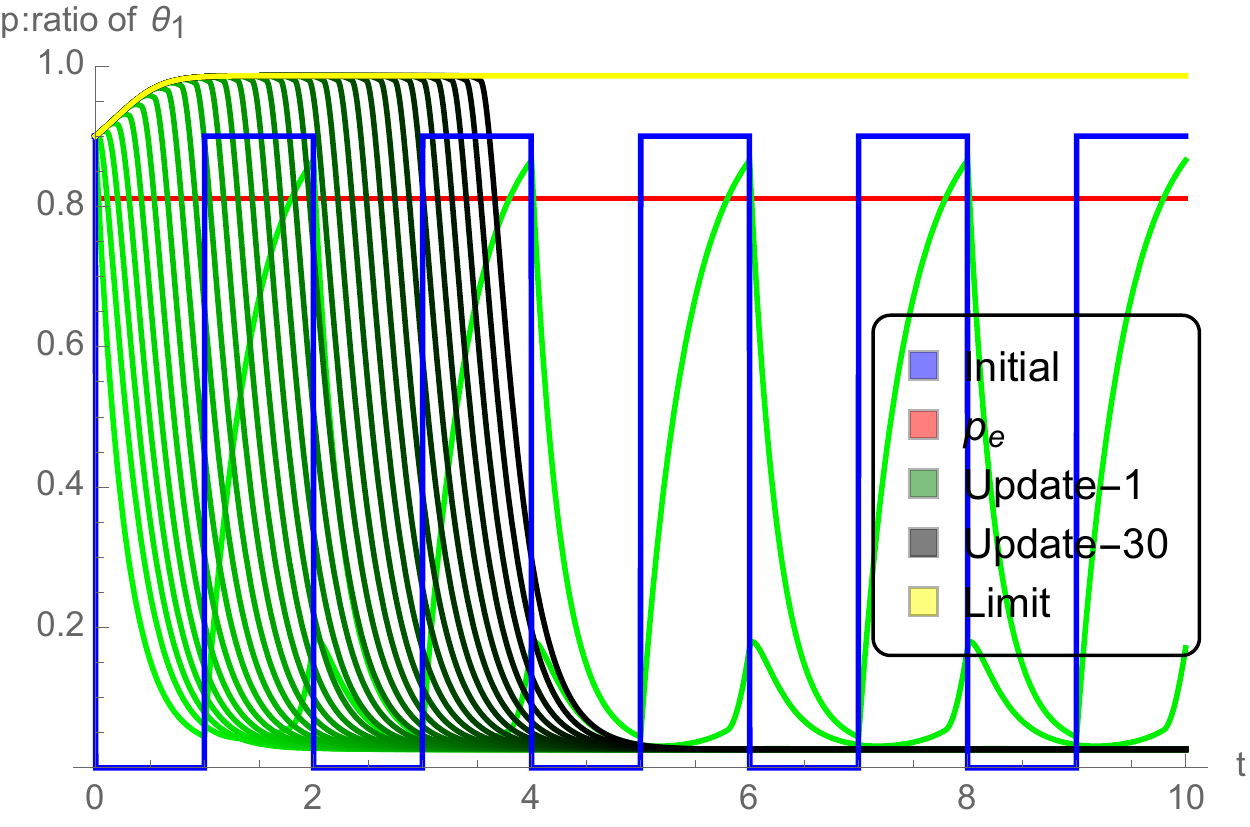}
\includegraphics[width=0.45\textwidth]{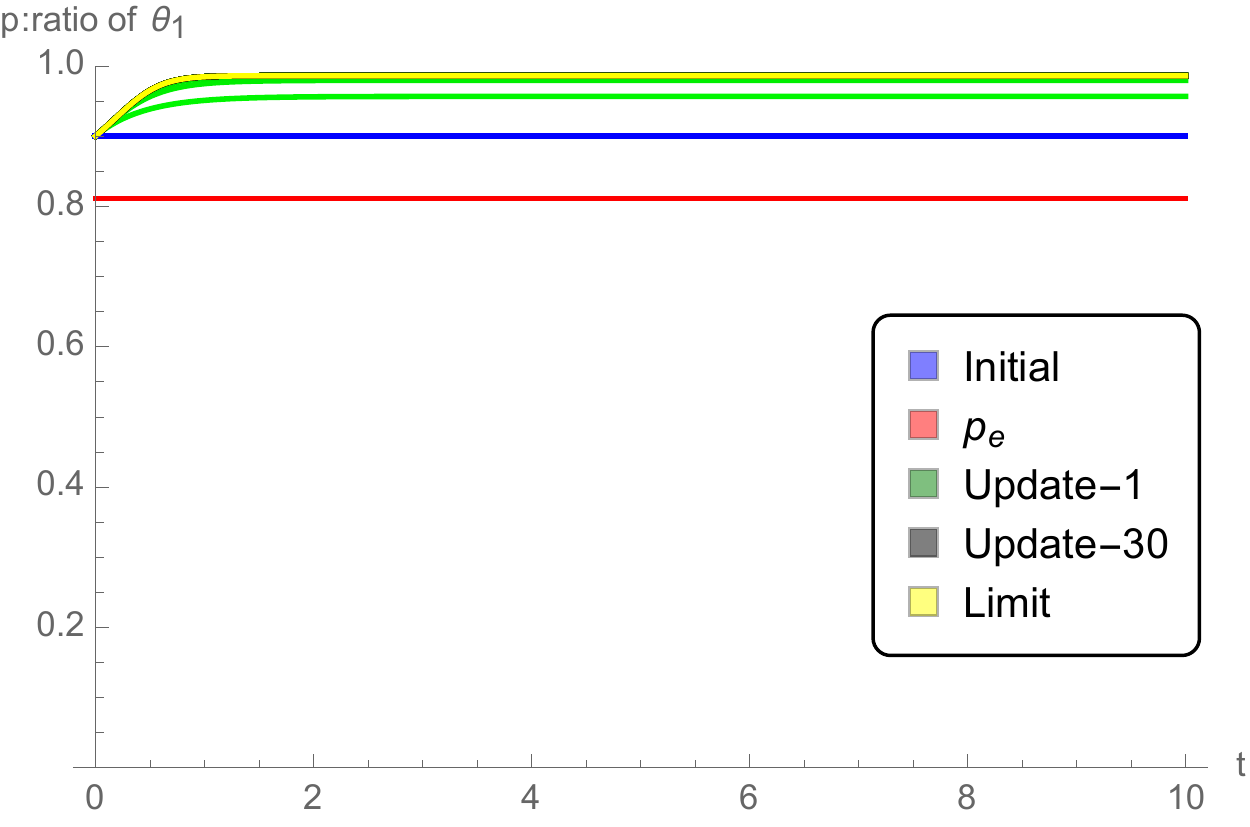}
\caption{Example of the market prediction by three traders starting from the current market ratio $p_{0}=0.9$.  The first 30 updates of predictions (green lines with gradation) are shown with the different traders (blue lines); decreasing, oscillating step and static function.  All have different initial predictions, but eventually they are corrected to the unique limit (yellow line) satisfying \eqref{eq:ODEofFixedPoint}. Here we set $\Delta \theta =1, \rho = 0.9, c = 0.03, \alpha = 5, \beta = 0.1$.  The critical ratio is $p_{e} = 0.811111$.  }
\label{fig:PicardIterationDecreasing}
\end{center}
\end{figure}

Figure \ref{fig:PicardIterationDecreasing} shows an example of three different initial predictions of traders: decreasing (Alice), oscillating step (Bob) and static (Charlie), starting from the same current market $p_{0}=0.9$, which is higher than the critical value $p_{e}$ just like the position $A$ in Figure \ref{fig:Contour_g.pdf}.  The repeated updates of prediction (green lines with gradation) by \eqref{eq:ODE Picard Iteration}, eventually converge to a unique common prospective market $x$ (yellow line) satisfying \eqref{eq:ODEofFixedPoint}, and all three traders will agree to select the lower price $\theta_{1}$.  It is worth to note that even when the initial future prediction is strong (see the middle graph, Bob's initial prediction is $x_{0}(0+)=0$ and to select $\theta_{2}$), the trader corrects the prediction for a weaker one (to select $\theta_{1}$).  This is because the delay in traders’ reactions to the market always guarantees more traders selecting $\theta_{1}$ in short term, even when the most immediate initial subjective prediction tells the trader to select $\theta_{2}$.

Figure \ref{fig:ReversedPicardIteration} shows the case when the current market is below $p_{e}$ ($p_{0}=0.7 < p_{e}$).  Unlike the previous case, the limit prospect \eqref{eq:ODEofFixedPoint}, suggests the selection of $\theta_{2}$, even when the initial subjective prediction suggests the selection of $\theta_{1}$ in the future.

\begin{figure}[tbp]
\begin{center}
\includegraphics[width=0.45\textwidth]{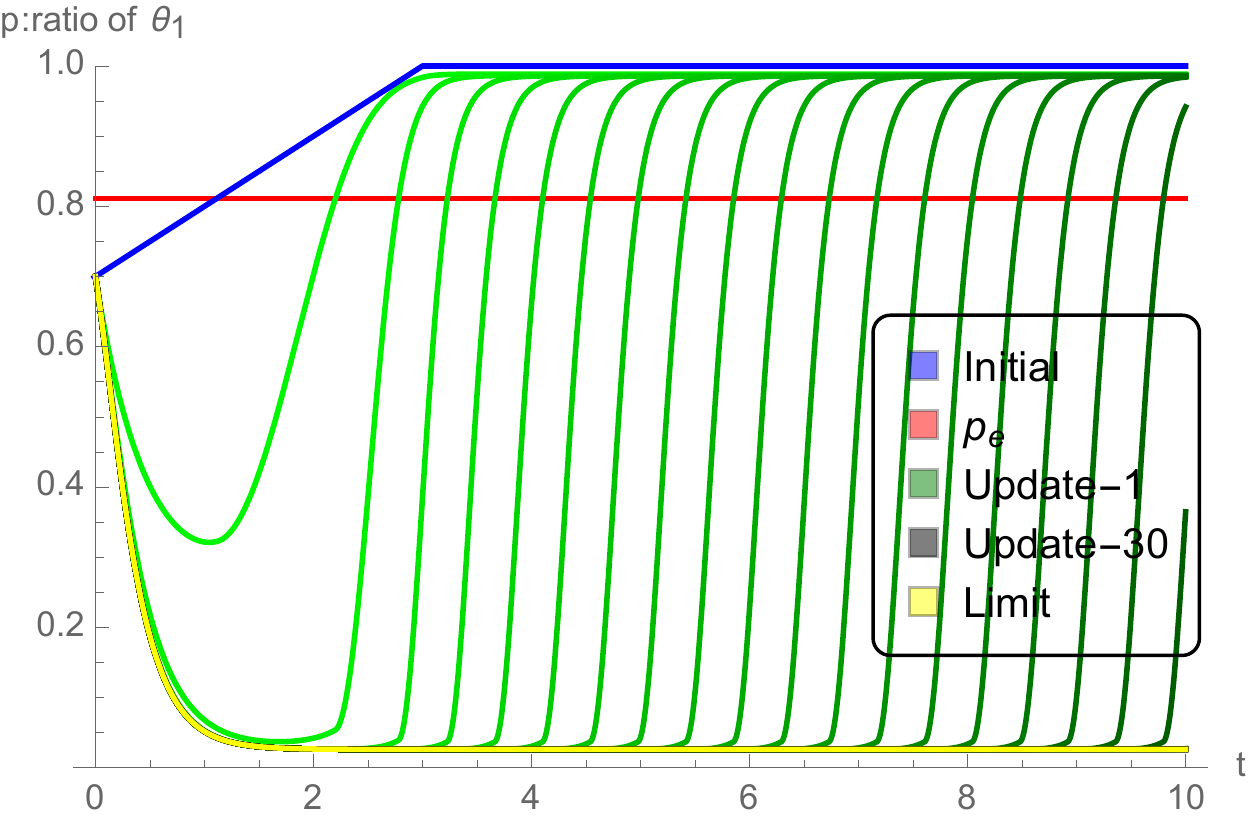}
\includegraphics[width=0.45\textwidth]{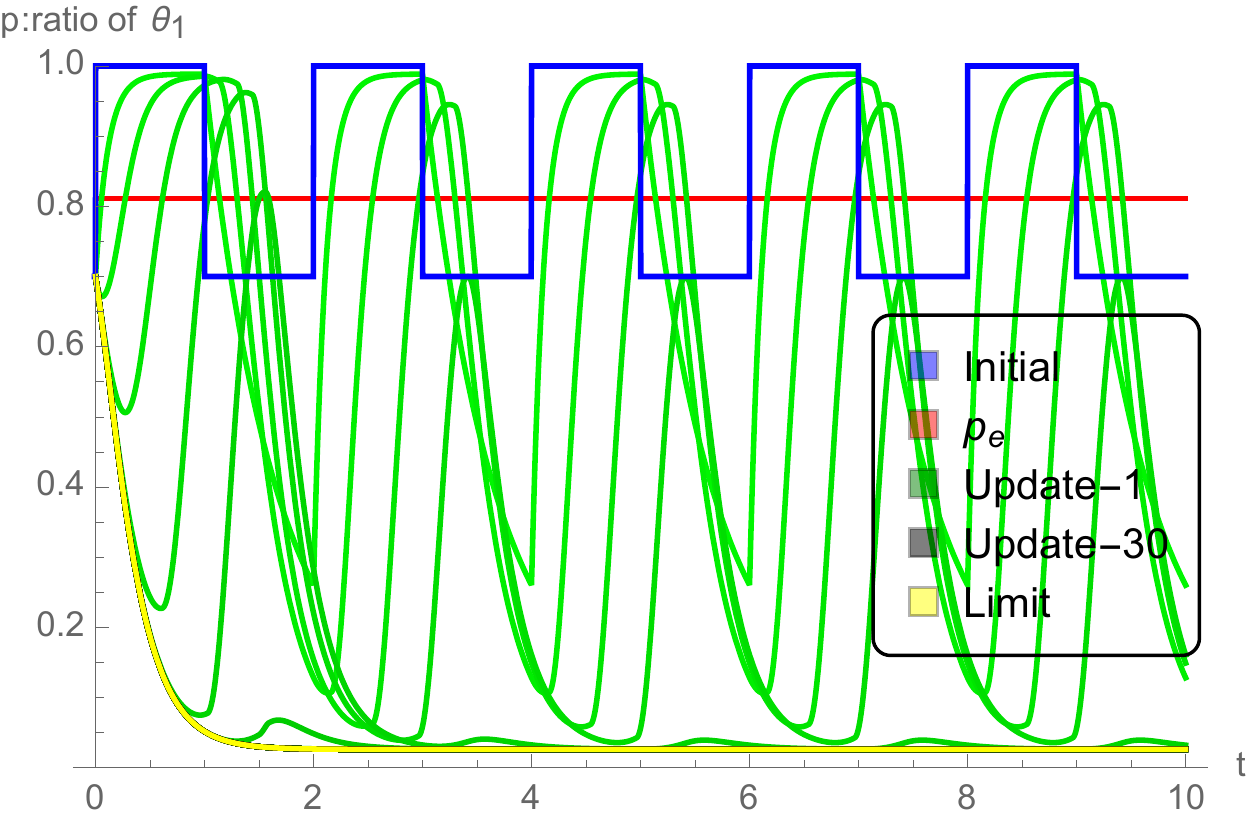}
\includegraphics[width=0.45\textwidth]{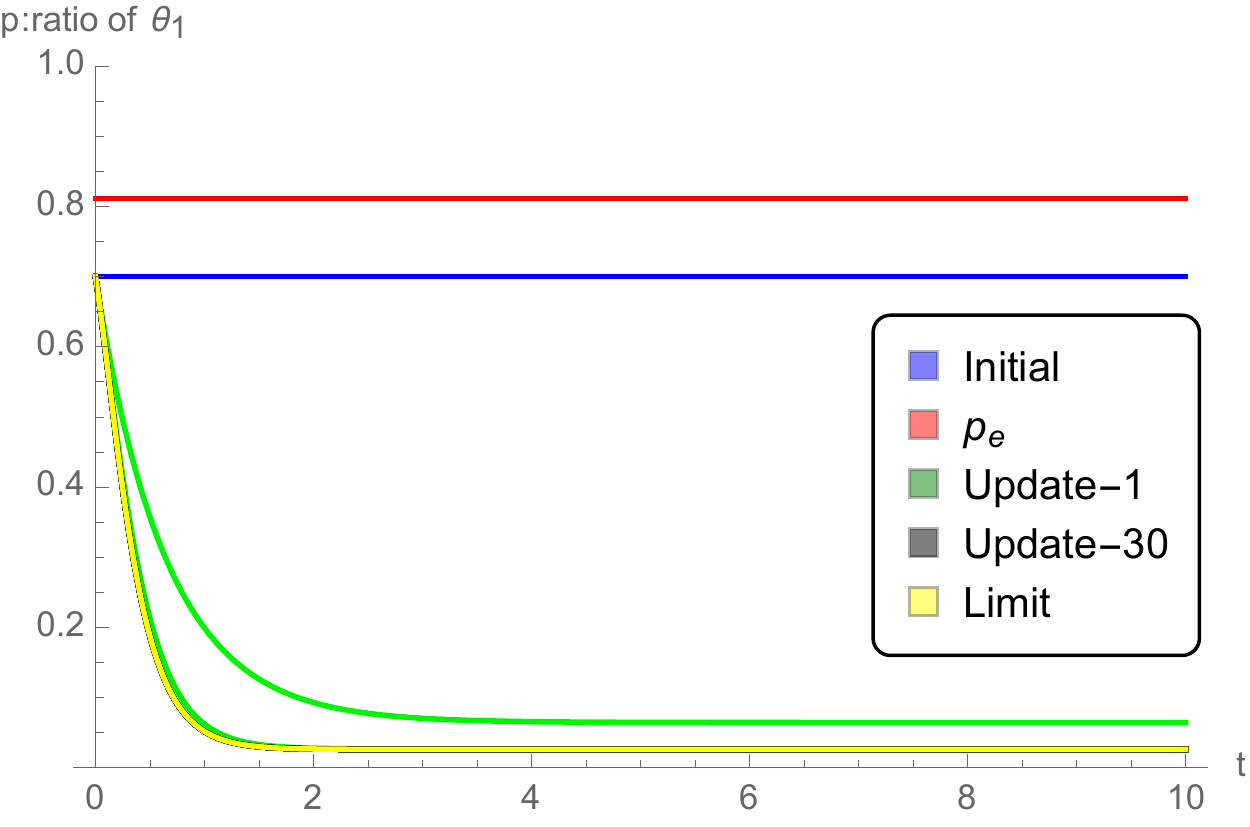}
\caption{Example of the market prediction starting from the current market ratio $p_{0}=0.7$.  The first 30 updates of predictions (green lines) are shown with the different initial predictions (blue lines).  Eventually all converge to the unique limit (yellow line) satisfying \eqref{eq:ODEofFixedPoint}. Here we set $\Delta \theta =1, \rho = 0.9, c = 0.03, \alpha = 5, \beta = 0.1$.  The critical ratio  is $p_{e} = 0.811111$.  }
\label{fig:ReversedPicardIteration}
\end{center}
\end{figure}

Figure \ref{fig:PicardIterationLOB-Equi.pdf} depicts the updates \eqref{eq:ODE Picard Iteration} in the market starting with the critical value $p_{e}$ and without random change ($\beta=0$).  In this case, although the current market is in the unstable equilibrium, the predictions converge to the unstable equilibrium.

\begin{figure}[tbp]
\begin{center}
\includegraphics[width=0.45\textwidth]{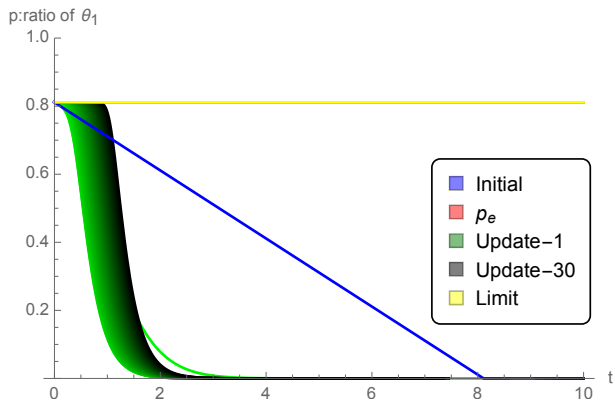}
\caption{Example of the market prediction by a trader starting from the current market ratio $p_{0}=p_{e}$.   The first 30 updates of predictions (green lines) are shown with the initial predictions (blue lines).  Here we set $\Delta \theta =1, \rho = 0.9, c = 0.03, \alpha = 5, \beta = 0$.    }
\label{fig:PicardIterationLOB-Equi.pdf}
\end{center}
\end{figure}

The random zero-intelligence behavior of traders is captured by $\beta$, which also influences the swarm limit $x$ (green lines with gradation from $\beta=0$ to $\beta=0.25$), as seen in Figure \ref{fig:P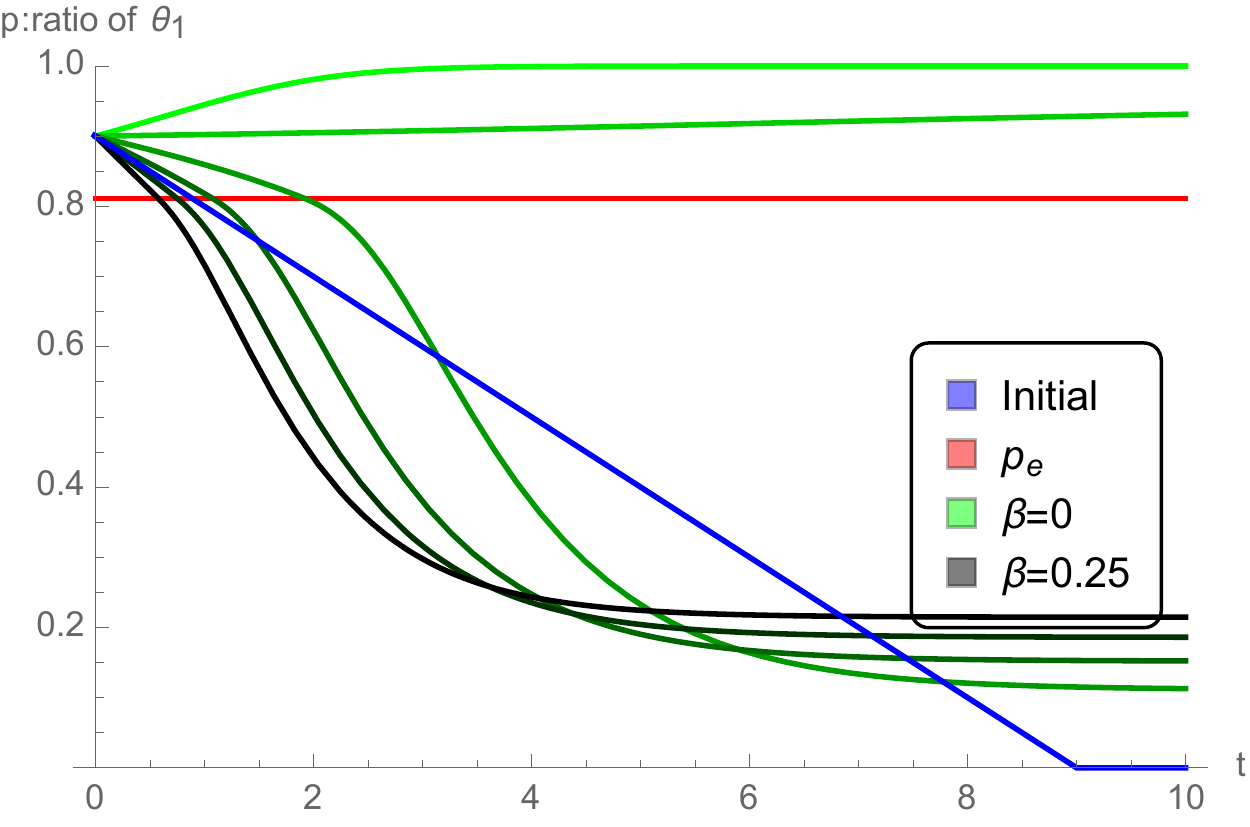}. The smaller the random effect $\beta$, the more dependent on the initial value $p_{0}$.  Thus, the swarm limit $x$ depends on both the current market $p_{0}$ and the random behavior $\beta$.   The gradient to $\theta_{1}$ (the right hand side of \eqref{eq:ODEofFixedPoint}) at the time $0$ on the $(p_{0},\beta)$-plane is depicted in Figure \ref{fig: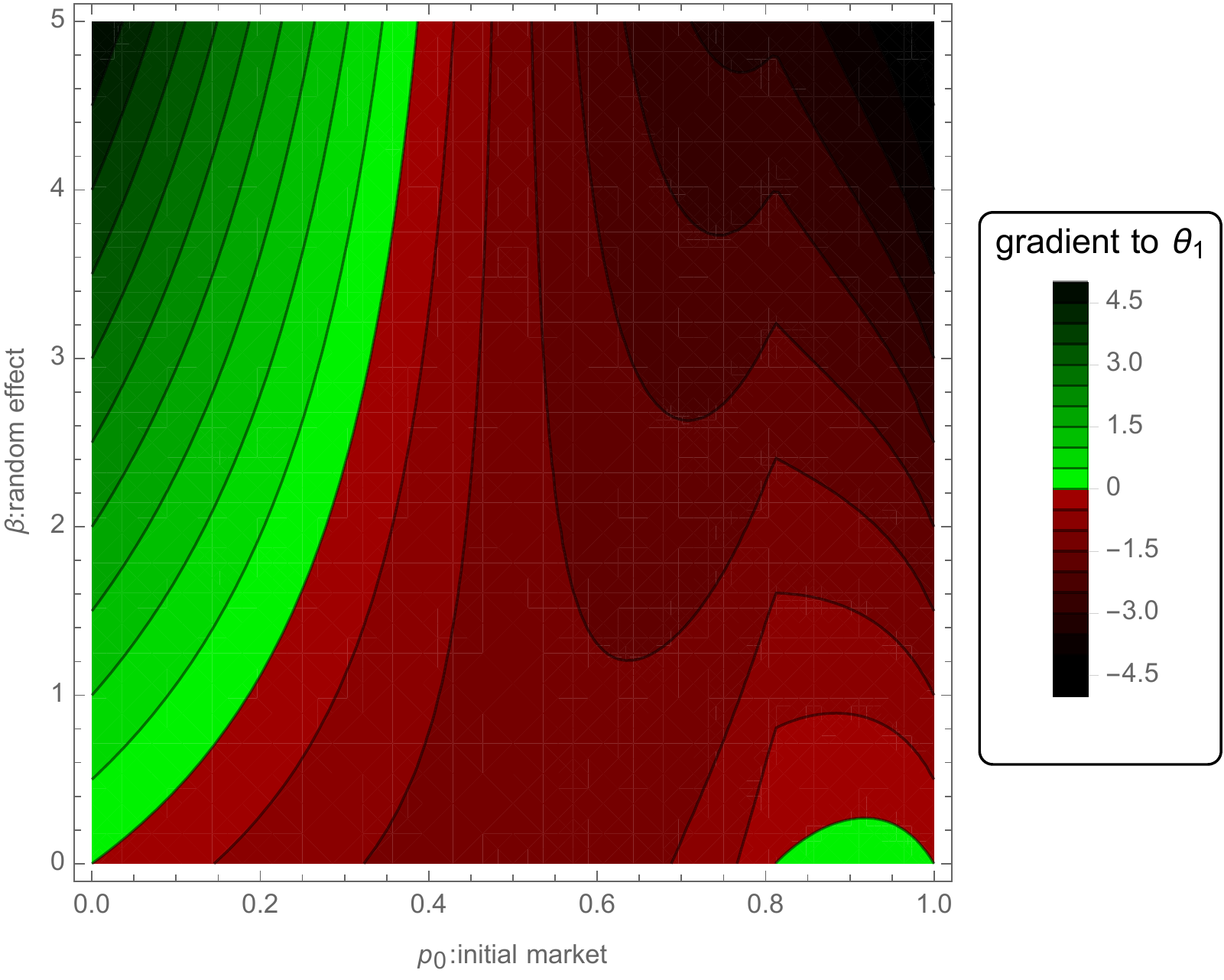}, which shows the complex response to the parameters.  This is because the random behavior of traders makes the system mean-reverting to $p=1/2$, while the boundaries $p=0$ and $p=1$ are stable equilibriums.  

Traditionally, the difference among traders' subjective predictions in the market is believed to randomize traders' behavior \cite{fama1965behavior}, however our analysis may suggest that traders' subjective predictions will eventually converges and the randomness of market is mainly caused by the zero-intelligence behaviors.

\begin{figure}[tbp]
\begin{center}
\includegraphics[width=0.45\textwidth]{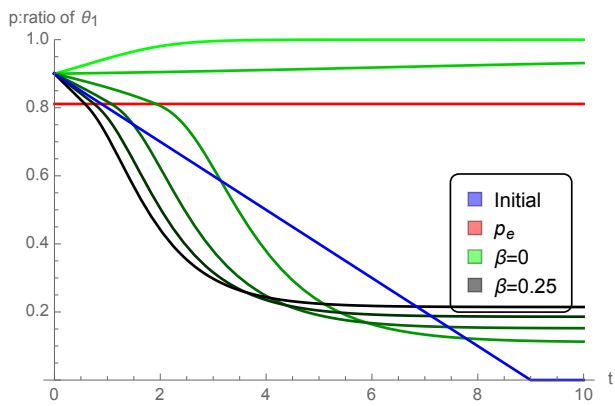}
\caption{Effect of zero-intelligent random switching $\beta$ starting with the current market ratio $p_{0}=0.9$.  The limits $x$ with the different $\beta=0,05,0.1,0.15, 0.2, 0.25$ are shown in the green lines with gradation.  Here we set $\Delta \theta =1, \rho = 0.9, c = 0.03, \alpha = 5$.}
\label{fig:PrandomEffect.pdf}
\end{center}
\end{figure}

\begin{figure}[tbp]
\begin{center}
\includegraphics[width=0.45\textwidth]{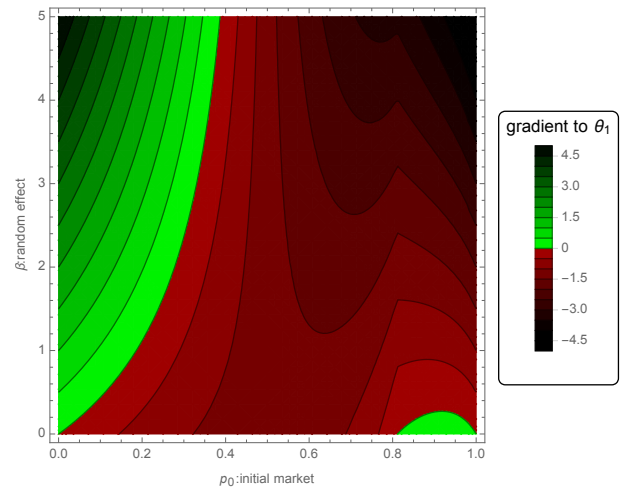}
\caption{Gradient to $\theta_{1}$ at the time $0$ ($-\alpha_{1}(p_{0})p_{0}+\alpha_{2}(p_{0})(1-p_{0})$) on $(p_{0},\beta)$-plane.  Here we set $\Delta \theta =1, \rho = 0.9, c = 0.03, \alpha = 5$.}
\label{fig:randomEffectContour.pdf}
\end{center}
\end{figure}

\bibliographystyle{apsrev4-1}
\bibliography{/Users/toyo/BoxSyncNoSpace/Public_Share/References/2015,/Users/toyo/BoxSyncNoSpace/Public_Share/References/2016,/Users/toyo/BoxSyncNoSpace/Public_Share/References/2017}

\begin{thebibliography}{23}%
\makeatletter
\providecommand \@ifxundefined [1]{%
 \@ifx{#1\undefined}
}%
\providecommand \@ifnum [1]{%
 \ifnum #1\expandafter \@firstoftwo
 \else \expandafter \@secondoftwo
 \fi
}%
\providecommand \@ifx [1]{%
 \ifx #1\expandafter \@firstoftwo
 \else \expandafter \@secondoftwo
 \fi
}%
\providecommand \natexlab [1]{#1}%
\providecommand \enquote  [1]{``#1''}%
\providecommand \bibnamefont  [1]{#1}%
\providecommand \bibfnamefont [1]{#1}%
\providecommand \citenamefont [1]{#1}%
\providecommand \href@noop [0]{\@secondoftwo}%
\providecommand \href [0]{\begingroup \@sanitize@url \@href}%
\providecommand \@href[1]{\@@startlink{#1}\@@href}%
\providecommand \@@href[1]{\endgroup#1\@@endlink}%
\providecommand \@sanitize@url [0]{\catcode `\\12\catcode `\$12\catcode
  `\&12\catcode `\#12\catcode `\^12\catcode `\_12\catcode `\%12\relax}%
\providecommand \@@startlink[1]{}%
\providecommand \@@endlink[0]{}%
\providecommand \url  [0]{\begingroup\@sanitize@url \@url }%
\providecommand \@url [1]{\endgroup\@href {#1}{\urlprefix }}%
\providecommand \urlprefix  [0]{URL }%
\providecommand \Eprint [0]{\href }%
\providecommand \doibase [0]{http://dx.doi.org/}%
\providecommand \selectlanguage [0]{\@gobble}%
\providecommand \bibinfo  [0]{\@secondoftwo}%
\providecommand \bibfield  [0]{\@secondoftwo}%
\providecommand \translation [1]{[#1]}%
\providecommand \BibitemOpen [0]{}%
\providecommand \bibitemStop [0]{}%
\providecommand \bibitemNoStop [0]{.\EOS\space}%
\providecommand \EOS [0]{\spacefactor3000\relax}%
\providecommand \BibitemShut  [1]{\csname bibitem#1\endcsname}%
\let\auto@bib@innerbib\@empty
\bibitem [{\citenamefont {Gould}\ \emph {et~al.}(2013)\citenamefont {Gould},
  \citenamefont {Porter}, \citenamefont {Williams}, \citenamefont {McDonald},
  \citenamefont {Fenn},\ and\ \citenamefont
  {Howison}}]{doi:10.1080/14697688.2013.803148}%
  \BibitemOpen
  \bibfield  {author} {\bibinfo {author} {\bibfnamefont {M.~D.}\ \bibnamefont
  {Gould}}, \bibinfo {author} {\bibfnamefont {M.~A.}\ \bibnamefont {Porter}},
  \bibinfo {author} {\bibfnamefont {S.}~\bibnamefont {Williams}}, \bibinfo
  {author} {\bibfnamefont {M.}~\bibnamefont {McDonald}}, \bibinfo {author}
  {\bibfnamefont {D.~J.}\ \bibnamefont {Fenn}}, \ and\ \bibinfo {author}
  {\bibfnamefont {S.~D.}\ \bibnamefont {Howison}},\ }\href {\doibase
  10.1080/14697688.2013.803148} {\bibfield  {journal} {\bibinfo  {journal}
  {Quantitative Finance}\ }\textbf {\bibinfo {volume} {13}},\ \bibinfo {pages}
  {1709} (\bibinfo {year} {2013})},\ \Eprint
  {http://arxiv.org/abs/http://dx.doi.org/10.1080/14697688.2013.803148}
  {http://dx.doi.org/10.1080/14697688.2013.803148} \BibitemShut {NoStop}%
\bibitem [{\citenamefont {Smith}\ \emph {et~al.}(2003)\citenamefont {Smith},
  \citenamefont {Farmer}, \citenamefont {Gillemot},\ and\ \citenamefont
  {Krishnamurthy}}]{Smith:2003rt}%
  \BibitemOpen
  \bibfield  {author} {\bibinfo {author} {\bibfnamefont {E.}~\bibnamefont
  {Smith}}, \bibinfo {author} {\bibfnamefont {J.~D.}\ \bibnamefont {Farmer}},
  \bibinfo {author} {\bibfnamefont {L.}~\bibnamefont {Gillemot}}, \ and\
  \bibinfo {author} {\bibfnamefont {S.}~\bibnamefont {Krishnamurthy}},\
  }\bibfield  {booktitle} {\emph {\bibinfo {booktitle} {Quantitative
  Finance}},\ }\href {\doibase 10.1088/1469-7688/3/6/307} {\bibfield  {journal}
  {\bibinfo  {journal} {Quantitative Finance}\ }\textbf {\bibinfo {volume}
  {3}},\ \bibinfo {pages} {481} (\bibinfo {year} {2003})}\BibitemShut {NoStop}%
\bibitem [{\citenamefont {Ichiki}\ and\ \citenamefont
  {Nishinari}(2015)}]{Ichiki:2015aa}%
  \BibitemOpen
  \bibfield  {author} {\bibinfo {author} {\bibfnamefont {S.}~\bibnamefont
  {Ichiki}}\ and\ \bibinfo {author} {\bibfnamefont {K.}~\bibnamefont
  {Nishinari}},\ }\href {\doibase
  http://dx.doi.org/10.1016/j.physa.2014.11.016} {\bibfield  {journal}
  {\bibinfo  {journal} {Physica A: Statistical Mechanics and its Applications}\
  }\textbf {\bibinfo {volume} {420}},\ \bibinfo {pages} {304} (\bibinfo {year}
  {2015})}\BibitemShut {NoStop}%
\bibitem [{\citenamefont {Ro{\c s}u}(2009)}]{Rosu01112009}%
  \BibitemOpen
  \bibfield  {author} {\bibinfo {author} {\bibfnamefont {I.}~\bibnamefont
  {Ro{\c s}u}},\ }\href {\doibase 10.1093/rfs/hhp011} {\bibfield  {journal}
  {\bibinfo  {journal} {Review of Financial Studies}\ }\textbf {\bibinfo
  {volume} {22}},\ \bibinfo {pages} {4601} (\bibinfo {year} {2009})},\ \Eprint
  {http://arxiv.org/abs/http://rfs.oxfordjournals.org/content/22/11/4601.full.pdf+html}
  {http://rfs.oxfordjournals.org/content/22/11/4601.full.pdf+html} \BibitemShut
  {NoStop}%
\bibitem [{\citenamefont {Huang}\ \emph {et~al.}(2015)\citenamefont {Huang},
  \citenamefont {Lehalle},\ and\ \citenamefont
  {Rosenbaum}}]{doi:10.1080/01621459.2014.982278}%
  \BibitemOpen
  \bibfield  {author} {\bibinfo {author} {\bibfnamefont {W.}~\bibnamefont
  {Huang}}, \bibinfo {author} {\bibfnamefont {C.-A.}\ \bibnamefont {Lehalle}},
  \ and\ \bibinfo {author} {\bibfnamefont {M.}~\bibnamefont {Rosenbaum}},\
  }\href {\doibase 10.1080/01621459.2014.982278} {\bibfield  {journal}
  {\bibinfo  {journal} {Journal of the American Statistical Association}\
  }\textbf {\bibinfo {volume} {110}},\ \bibinfo {pages} {107} (\bibinfo {year}
  {2015})},\ \Eprint
  {http://arxiv.org/abs/http://dx.doi.org/10.1080/01621459.2014.982278}
  {http://dx.doi.org/10.1080/01621459.2014.982278} \BibitemShut {NoStop}%
\bibitem [{\citenamefont {Slanina}(2001)}]{PhysRevE.64.056136}%
  \BibitemOpen
  \bibfield  {author} {\bibinfo {author} {\bibfnamefont {F.~c.~v.}\
  \bibnamefont {Slanina}},\ }\href {\doibase 10.1103/PhysRevE.64.056136}
  {\bibfield  {journal} {\bibinfo  {journal} {Phys. Rev. E}\ }\textbf {\bibinfo
  {volume} {64}},\ \bibinfo {pages} {056136} (\bibinfo {year}
  {2001})}\BibitemShut {NoStop}%
\bibitem [{\citenamefont {Maglaras}\ \emph {et~al.}(2015)\citenamefont
  {Maglaras}, \citenamefont {Moallemi},\ and\ \citenamefont
  {Zheng}}]{maglaras2015optimal}%
  \BibitemOpen
  \bibfield  {author} {\bibinfo {author} {\bibfnamefont {C.}~\bibnamefont
  {Maglaras}}, \bibinfo {author} {\bibfnamefont {C.~C.}\ \bibnamefont
  {Moallemi}}, \ and\ \bibinfo {author} {\bibfnamefont {H.}~\bibnamefont
  {Zheng}},\ }\href@noop {} {\bibfield  {journal} {\bibinfo  {journal}
  {Columbia Business School Research Paper}\ } (\bibinfo {year}
  {2015})}\BibitemShut {NoStop}%
\bibitem [{\citenamefont {Toyoizumi}(2017)}]{Toyoizumi:2016xe}%
  \BibitemOpen
  \bibfield  {author} {\bibinfo {author} {\bibfnamefont {H.}~\bibnamefont
  {Toyoizumi}},\ }\href@noop {} {\bibfield  {journal} {\bibinfo  {journal}
  {Japan Journal of Operations Research}\ }\textbf {\bibinfo {volume} {60}}
  (\bibinfo {year} {2017})}\BibitemShut {NoStop}%
\bibitem [{\citenamefont {Cont}\ \emph {et~al.}(2010)\citenamefont {Cont},
  \citenamefont {Stoikov},\ and\ \citenamefont {Talreja}}]{Cont:2010uq}%
  \BibitemOpen
  \bibfield  {author} {\bibinfo {author} {\bibfnamefont {R.}~\bibnamefont
  {Cont}}, \bibinfo {author} {\bibfnamefont {S.}~\bibnamefont {Stoikov}}, \
  and\ \bibinfo {author} {\bibfnamefont {R.}~\bibnamefont {Talreja}},\
  }\bibfield  {booktitle} {\emph {\bibinfo {booktitle} {Operations Research}},\
  }\href {\doibase 10.1287/opre.1090.0780} {\bibfield  {journal} {\bibinfo
  {journal} {Operations Research}\ }\textbf {\bibinfo {volume} {58}},\ \bibinfo
  {pages} {549} (\bibinfo {year} {2010})}\BibitemShut {NoStop}%
\bibitem [{\citenamefont {Kelly}\ and\ \citenamefont
  {Yudovina}(2015)}]{kelly2015markov}%
  \BibitemOpen
  \bibfield  {author} {\bibinfo {author} {\bibfnamefont {F.}~\bibnamefont
  {Kelly}}\ and\ \bibinfo {author} {\bibfnamefont {E.}~\bibnamefont
  {Yudovina}},\ }\href@noop {} {\bibfield  {journal} {\bibinfo  {journal}
  {arXiv preprint arXiv:1504.00579}\ } (\bibinfo {year} {2015})}\BibitemShut
  {NoStop}%
\bibitem [{\citenamefont {Toke}(2015)}]{doi:10.1080/14697688.2014.963654}%
  \BibitemOpen
  \bibfield  {author} {\bibinfo {author} {\bibfnamefont {I.~M.}\ \bibnamefont
  {Toke}},\ }\href {\doibase 10.1080/14697688.2014.963654} {\bibfield
  {journal} {\bibinfo  {journal} {Quantitative Finance}\ }\textbf {\bibinfo
  {volume} {15}},\ \bibinfo {pages} {795} (\bibinfo {year} {2015})},\ \Eprint
  {http://arxiv.org/abs/http://dx.doi.org/10.1080/14697688.2014.963654}
  {http://dx.doi.org/10.1080/14697688.2014.963654} \BibitemShut {NoStop}%
\bibitem [{\citenamefont {Keynes}(1937)}]{keynes1937general}%
  \BibitemOpen
  \bibfield  {author} {\bibinfo {author} {\bibfnamefont {J.~M.}\ \bibnamefont
  {Keynes}},\ }\href@noop {} {\bibfield  {journal} {\bibinfo  {journal} {The
  quarterly journal of economics}\ }\textbf {\bibinfo {volume} {51}},\ \bibinfo
  {pages} {209} (\bibinfo {year} {1937})}\BibitemShut {NoStop}%
\bibitem [{\citenamefont {Allen}\ \emph {et~al.}(2006)\citenamefont {Allen},
  \citenamefont {Morris},\ and\ \citenamefont {Shin}}]{Allen:2006fk}%
  \BibitemOpen
  \bibfield  {author} {\bibinfo {author} {\bibfnamefont {F.}~\bibnamefont
  {Allen}}, \bibinfo {author} {\bibfnamefont {S.}~\bibnamefont {Morris}}, \
  and\ \bibinfo {author} {\bibfnamefont {H.~S.}\ \bibnamefont {Shin}},\ }\href
  {http://dx.doi.org/10.1093/rfs/hhj036} {\bibfield  {journal} {\bibinfo
  {journal} {The Review of Financial Studies}\ }\textbf {\bibinfo {volume}
  {19}},\ \bibinfo {pages} {719} (\bibinfo {year} {2006})}\BibitemShut
  {NoStop}%
\bibitem [{\citenamefont {Carmona}\ and\ \citenamefont
  {Delarue}(2013)}]{Carmona:2013fk}%
  \BibitemOpen
  \bibfield  {author} {\bibinfo {author} {\bibfnamefont {R.}~\bibnamefont
  {Carmona}}\ and\ \bibinfo {author} {\bibfnamefont {F.}~\bibnamefont
  {Delarue}},\ }\bibfield  {booktitle} {\emph {\bibinfo {booktitle} {SIAM
  Journal on Control and Optimization}},\ }\href {\doibase 10.1137/120883499}
  {\bibfield  {journal} {\bibinfo  {journal} {SIAM Journal on Control and
  Optimization}\ }\textbf {\bibinfo {volume} {51}},\ \bibinfo {pages} {2705}
  (\bibinfo {year} {2013})}\BibitemShut {NoStop}%
\bibitem [{\citenamefont {Carmona}\ and\ \citenamefont
  {Lacker}(2015)}]{carmona2015}%
  \BibitemOpen
  \bibfield  {author} {\bibinfo {author} {\bibfnamefont {R.}~\bibnamefont
  {Carmona}}\ and\ \bibinfo {author} {\bibfnamefont {D.}~\bibnamefont
  {Lacker}},\ }\href {\doibase 10.1214/14-AAP1020} {\bibfield  {journal}
  {\bibinfo  {journal} {Ann. Appl. Probab.}\ }\textbf {\bibinfo {volume}
  {25}},\ \bibinfo {pages} {1189} (\bibinfo {year} {2015})}\BibitemShut
  {NoStop}%
\bibitem [{\citenamefont {Lasry}\ and\ \citenamefont {Lions}(2015)}]{Lasry:qy}%
  \BibitemOpen
  \bibfield  {author} {\bibinfo {author} {\bibfnamefont {J.-M.}\ \bibnamefont
  {Lasry}}\ and\ \bibinfo {author} {\bibfnamefont {P.-L.}\ \bibnamefont
  {Lions}},\ }\href
  {http://www.ifd.dauphine.fr/fileadmin/mediatheque/recherche_et_valo/FDD/Cahier_Chaire_2.pdf}
  {\enquote {\bibinfo {title} {Mean field games},}\ } (\bibinfo {year}
  {2015})\BibitemShut {NoStop}%
\bibitem [{\citenamefont {Lachapelle}\ \emph {et~al.}(2013)\citenamefont
  {Lachapelle}, \citenamefont {Lasry}, \citenamefont {Lehalle},\ and\
  \citenamefont {Lions}}]{lachapelle2013efficiency}%
  \BibitemOpen
  \bibfield  {author} {\bibinfo {author} {\bibfnamefont {A.}~\bibnamefont
  {Lachapelle}}, \bibinfo {author} {\bibfnamefont {J.-M.}\ \bibnamefont
  {Lasry}}, \bibinfo {author} {\bibfnamefont {C.-A.}\ \bibnamefont {Lehalle}},
  \ and\ \bibinfo {author} {\bibfnamefont {P.-L.}\ \bibnamefont {Lions}},\
  }\href@noop {} {\bibfield  {journal} {\bibinfo  {journal} {arXiv preprint
  arXiv:1305.6323}\ } (\bibinfo {year} {2013})}\BibitemShut {NoStop}%
\bibitem [{\citenamefont {Gu{\'e}ant}\ \emph {et~al.}(2011)\citenamefont
  {Gu{\'e}ant}, \citenamefont {Lasry},\ and\ \citenamefont
  {Lions}}]{gueant2011mean}%
  \BibitemOpen
  \bibfield  {author} {\bibinfo {author} {\bibfnamefont {O.}~\bibnamefont
  {Gu{\'e}ant}}, \bibinfo {author} {\bibfnamefont {J.-M.}\ \bibnamefont
  {Lasry}}, \ and\ \bibinfo {author} {\bibfnamefont {P.-L.}\ \bibnamefont
  {Lions}},\ }in\ \href@noop {} {\emph {\bibinfo {booktitle} {Paris-Princeton
  lectures on mathematical finance 2010}}}\ (\bibinfo  {publisher} {Springer},\
  \bibinfo {year} {2011})\ pp.\ \bibinfo {pages} {205--266}\BibitemShut
  {NoStop}%
\bibitem [{\citenamefont {Dobrushkin}(2014)}]{Dobrushkin:2014rt}%
  \BibitemOpen
  \bibfield  {author} {\bibinfo {author} {\bibfnamefont {V.~A.}\ \bibnamefont
  {Dobrushkin}},\ }\href@noop {} {\emph {\bibinfo {title} {Applied Differential
  Equations: The Primary Course}}}\ (\bibinfo  {publisher} {Chapman and
  Hall/CRC},\ \bibinfo {year} {2014})\BibitemShut {NoStop}%
\bibitem [{\citenamefont {Hassin}\ and\ \citenamefont
  {Haviv}(2003)}]{hassin2003queue}%
  \BibitemOpen
  \bibfield  {author} {\bibinfo {author} {\bibfnamefont {R.}~\bibnamefont
  {Hassin}}\ and\ \bibinfo {author} {\bibfnamefont {M.}~\bibnamefont {Haviv}},\
  }\href@noop {} {\emph {\bibinfo {title} {To queue or not to queue:
  Equilibrium behavior in queueing systems}}},\ Vol.~\bibinfo {volume} {59}\
  (\bibinfo  {publisher} {Springer Science \& Business Media},\ \bibinfo {year}
  {2003})\BibitemShut {NoStop}%
\bibitem [{\citenamefont {Kolokoltsov}(2010)}]{kolokoltsov2010nonlinear}%
  \BibitemOpen
  \bibfield  {author} {\bibinfo {author} {\bibfnamefont {V.~N.}\ \bibnamefont
  {Kolokoltsov}},\ }\href@noop {} {\emph {\bibinfo {title} {Nonlinear Markov
  processes and kinetic equations}}},\ Vol.\ \bibinfo {volume} {182}\ (\bibinfo
   {publisher} {Cambridge University Press},\ \bibinfo {year}
  {2010})\BibitemShut {NoStop}%
\bibitem [{\citenamefont {Fama}(1965)}]{fama1965behavior}%
  \BibitemOpen
  \bibfield  {author} {\bibinfo {author} {\bibfnamefont {E.~F.}\ \bibnamefont
  {Fama}},\ }\href@noop {} {\bibfield  {journal} {\bibinfo  {journal} {The
  journal of Business}\ }\textbf {\bibinfo {volume} {38}},\ \bibinfo {pages}
  {34} (\bibinfo {year} {1965})}\BibitemShut {NoStop}%
\bibitem [{\citenamefont {Kleinrock}(1975)}]{kleinrock}%
  \BibitemOpen
  \bibfield  {author} {\bibinfo {author} {\bibfnamefont {L.}~\bibnamefont
  {Kleinrock}},\ }\href@noop {} {\emph {\bibinfo {title} {Queueing Systems Vol.
  1}}}\ (\bibinfo  {publisher} {John Wiley and Sons},\ \bibinfo {year}
  {1975})\BibitemShut {NoStop}%
\end{thebibliography}%

\clearpage

\section*{Supplemental Material}

\subsection{Expected Waiting Times $E[W_{1}]$ and $E[W_{2}]$}\label{Expected Waiting Times}
Here we summarize the known results for $M/M/1$ priority queues required for analyzing the LOB (see \citep{hassin2003queue,kleinrock,doi:10.1080/14697688.2014.963654}).

The waiting time of a $M/M/1$ queue with Poisson arrival (the rate $\lambda$) and the exponential service time (the rate $\mu$) is $1/(\mu- \lambda )$, when $\rho =\lambda/\mu <1$ and the queue is in the stationary state.  In the LOB model, the service time corresponds to the time required for the most favorable order to be executed, which equals to the inter-arrival time of buy market orders.

Let $p$ be the ratio of traders who place $\theta_{1}$ orders, and assume that the priority queue is in the stationary state.  Since low-prioritized $\theta_{2}$ orders do not affect the high-priority $\theta_{1}$ queue, the latter can be modeled by a $M/M/1$ queue with the arrival rate $\lambda p$.  Thus,  $E[W_{1}]=1/(\mu - \lambda p)$.  In addition, the aggregated mean waiting time of the $\theta_{1}$ and $\theta_{2}$ queues, which can be estimated by an another $M/M/1$ queue with the arrival rate $\lambda$, is $1/(\mu - \lambda)$, since the total waiting time (workload) is indifferent to whether or not the priority is adopted (the priority to one order is always compensated by the delay of others).  Thus, by the conservation law of workload, we have
\begin{align}
\frac{1}{\mu-\lambda} = p E[W_{1}] + (1-p) E[W_{2}].
\end{align}
Re-arranging the terms, we obtain the expected waiting time in $\theta_{2}$ as
\begin{align}
E[W_{2}] = \frac{\mu}{(\mu-\lambda)(\mu-\lambda p)}.
\end{align}

\subsection{Picard-like Iteration of Market Prediction}\label{Picard-like Iteration of Market Prediction}
Here we prove the uniform convergence of the Picard-like iteration $x_{n}$ to a unique function $x$ on the finite interval $[0,T]$ as in \eqref{eq:the limit}.

Let $g(p,c)$ be the function defined by
\begin{align}
g(p,c)=\frac{\rho c/\mu}{(1-\rho)(1-\rho p)} - \Delta \theta,
\end{align}
as in \eqref{eq:gain of 1}.  Then, $g(p,c)$ is Lipschitz continuous in $p\in[0,1]$.  Indeed, since $g(p,c)$ is convex and 
\begin{align}
\frac{\partial g}{\partial p}(p,c) \leq \frac{\partial g}{\partial p}(1,c) = \frac{\rho^{2} c/\mu}{(1-\rho)^{3}},
\end{align}
we have
\begin{align}
|g(p,c)-g(q,c)| \leq \frac{\rho^{2} c/\mu}{(1-\rho)^{3}}|p-q|.
\end{align}
Let $a(p)= -\left\{\alpha_{1}(p)+\alpha_{2}(p))\right\}= -2\beta-\alpha|g(p,c)|$ and $b(p)=\alpha_{2}(p)=\beta +\alpha g^{+}(p,c)$.  Since $g(p,c)$ is Lipchitz continuous and bounded function of $p\in[0,1]$, so are the functions $a(p)$ and $b(p)$.

Given an arbitrary measurable function $x(t)$ that has the value in $[0,1]$, consider a non-homogeneous differential equation for $u$:
\begin{align}\label{eq:non-homogeneous differential equation}
\frac{d}{dt}u(t) = a(x(t)) u(t) + b(x(t)),
\end{align}
with the initial condition $u(0)=p_{0}$, which is equivalent to the integral equation:
\begin{align}
u(t) = p_{0} e^{\int_{0}^{t}a(x(s))ds}+\int_{0}^{t}b(x(s))e^{\int_{s}^{t}a(x(s'))ds'}ds.
\end{align}

We give a useful lemma to evaluate the difference of the solutions of \eqref{eq:non-homogeneous differential equation}.
\begin{lemma}\label{lemma:difference}
Given two arbitrary functions $x(t)$ and $y(t)$, consider two differential equations:
\begin{align}
\frac{d}{dt}u(t) &= a(x(t)) u(t) + b(x(t)),\\
\frac{d}{dt}v(t) &= a(y(t)) u(t) + b(y(t)),
\end{align}
with the common initial condition $u(0)=v(0)=p_{0}$.  Then, there exists a constant $L>0$ such that for all $t \in [0,T]$,
\begin{align}
|u(t)-v(t)|\leq L\int_{0}^{t}|x(s)-y(s)|ds.
\end{align}
\end{lemma}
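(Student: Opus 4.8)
The plan is to study the difference $w(t):=u(t)-v(t)$ directly as the solution of a single scalar linear ODE. Subtracting the two equations and regrouping via the identity $a(x)u-a(y)v = a(x)(u-v)+(a(x)-a(y))v$, I obtain
\begin{align}
\frac{d}{dt}w(t) = a(x(t))\,w(t) + \bigl(a(x(t))-a(y(t))\bigr)v(t) + \bigl(b(x(t))-b(y(t))\bigr),
\end{align}
with $w(0)=0$. The right-hand side is linear in $w$, and its inhomogeneous part is driven entirely by the discrepancies $a(x)-a(y)$ and $b(x)-b(y)$. This is exactly the structure that will let me bound $w$ in terms of $|x-y|$, which is the content of the lemma.

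Next I would solve this linear equation by the integrating-factor (variation-of-constants) formula. Since $w(0)=0$, the homogeneous contribution vanishes and
\begin{align}
w(t) = \int_{0}^{t}\Bigl[\bigl(a(x(s))-a(y(s))\bigr)v(s) + \bigl(b(x(s))-b(y(s))\bigr)\Bigr]\,e^{\int_{s}^{t}a(x(\tau))\,d\tau}\,ds.
\end{align}
The integrating factor is immediate to control: because $a(p)=-2\beta-\alpha|g(p,c)|\le 0$ on $[0,1]$, the exponent $\int_{s}^{t}a(x(\tau))\,d\tau$ is non-positive, so the exponential is at most $1$ for every $0\le s\le t\le T$. (Even discarding the sign, the boundedness of $a$ established above would give a uniform factor $e^{\|a\|_{\infty}T}$.)

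It then remains to bound the bracketed term pointwise in $s$. First I would observe that $v$ solves the Kolmogorov equation $\dot v=-\alpha_{1}(y)v+\alpha_{2}(y)(1-v)$, so it stays a probability: the drift equals $\alpha_{2}\ge 0$ when $v=0$ and $-\alpha_{1}\le 0$ when $v=1$, forcing $v(t)\in[0,1]$ and hence $|v(t)|\le 1$. Combining this with the Lipschitz continuity of $a$ and $b$ (with constants $K_{a},K_{b}$), the integrand is bounded by $(K_{a}+K_{b})|x(s)-y(s)|$, so the claim follows with $L=K_{a}+K_{b}$, uniformly in $t\in[0,T]$. The only genuinely delicate point is justifying the a priori bound $v\in[0,1]$; everything else is a routine Gr\"onwall-type estimate, and since $w(0)=0$ removes the homogeneous term, no contraction or fixed-point argument is needed at this stage (that will come afterward, when this lemma is iterated to prove the convergence in \eqref{eq:the limit}).
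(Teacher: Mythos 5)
Your proof is correct, and while it starts from the same decomposition as the paper --- $a(x)u-a(y)v=a(x)(u-v)+(a(x)-a(y))v$, with both of you silently correcting the typo $a(y(t))u(t)$ to $a(y(t))v(t)$ in the second displayed equation --- it finishes by a genuinely different route. The paper converts the decomposition into the differential inequality $\frac{d}{dt}|u(t)-v(t)|\le L_{1}|u(t)-v(t)|+L_{2}|x(t)-y(t)|$ and applies Gronwall, obtaining $L=L_{2}e^{L_{1}T}$. You instead solve for $w=u-v$ exactly by variation of constants and bound the resulting integral directly, exploiting the model-specific sign $a(p)=-2\beta-\alpha|g(p,c)|\le 0$ so that the integrating factor is at most $1$; this yields the cleaner, $T$-independent constant $L=K_{a}+K_{b}$, and your fallback remark (discard the sign and use $e^{\|a\|_{\infty}T}$) recovers essentially the paper's constant, so nothing is lost in generality. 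Your write-up also makes explicit an ingredient the paper uses tacitly: to absorb the term $|a(x)-a(y)|\,|v|$ into a multiple of $|x-y|$, one needs an a priori bound on $|v|$ over $[0,T]$, which the paper's proof never states. Your forward-invariance argument (drift $\alpha_{2}(y)\ge 0$ at $v=0$ and $-\alpha_{1}(y)\le 0$ at $v=1$, with $p_{0}\in[0,1]$) supplies it; if you want it fully airtight, observe that $v$ and $1-v$ each satisfy a scalar affine ODE with nonnegative forcing ($\alpha_{2}(y)$, resp.\ $\alpha_{1}(y)$) and nonnegative initial data, so nonnegativity of both follows at once from the same variation-of-constants formula you already wrote down.
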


\begin{proof}
Since both $a(p)$ and $b(p)$ are Lipchitz continuous and bounded functions on $[0,1]$, we have
\begin{align}
\frac{d}{dt}|u(t)-v(t)|\leq L_{1}|u(t)-v(t)|+L_{2}|x(t)-y(t)|,
\end{align}
for some positive constant $L_{1}$ and $L_{2}$.  By Gronwall inequality, we have
\begin{align}
|u(t)-v(t)| &\leq |u(0)-v(0)|e^{L_{1}t} +\int_{0}^{t}L_{2}|x(s)-y(s)|e^{L_{1}(t-s)}ds\\
&\leq L\int_{0}^{t}|x(s)-y(s)|ds,
\end{align}
where $L=L_{2}e^{L_{1}T}$.
\end{proof}

Given an arbitrary initial measurable function $x_{0}(t)$ with the initial value $x_{0}(0)=p_{0}$, we define the Picard-like iteration $x_{n}$ by 
\begin{align}
x_{n}(t) =  p_{0} e^{\int_{0}^{t}a(x_{n-1}(s))ds}+\int_{0}^{t}b(x_{n-1}(s))e^{\int_{s}^{t}a(x_{n-1}(s'))ds'}ds,
\end{align}
which is equivalent to
\begin{align}
\frac{d}{dt}x_{n}(t) = a(x_{n-1}(t)) x_{n}(t) + b(x_{n-1}(t)),
\end{align}
with the initial condition $x_{n}(0)=p_{0}$.  Note that the original Picard iteration is derived from $dx_{n}(t)/dt = a(x_{n-1}(t)) x_{n-1}(t) + b(x_{n-1}(t))$ (see \cite{Dobrushkin:2014rt} for example), which is slightly different than ours.

Since both values $x_{0}(t)$ and $x_{1}(t)$ are always in $[0,1]$, $|x_{1}(t)-x_{0}(t)|\leq 1$. By Lemma \ref{lemma:difference},
\begin{align}
|x_{2}(t)-x_{1}(t)| \leq L\int_{0}^{t}|x_{1}(s)-x_{0}(s)|ds \leq Lt.
\end{align}
Inductively, we can show 
\begin{align}
|x_{n+1}(t)-x_{n}(t)| \leq\frac{(Lt)^{k}}{k!},
\end{align}
and for all $t \in [0,T]$,
\begin{align}
\sum_{n=0}^{\infty}|x_{n+1}(t)-x_{n}(t)| \leq e^{Lt}.
\end{align}
Hence, by Weierstrass M-test, the infinite sum $\sum_{n=0}^{\infty}(x_{n+1}(t)-x_{n}(t))$ converges uniformly on $[0,T]$.  Then, 
\begin{align}
x_{n}(t) &= x_{0}(t)+ \sum_{k=0}^{n}(x_{k+1}(t)-x_{k}(t)) \\
&\to x_{0}(t)+ \sum_{k=0}^{\infty}(x_{k+1}(t)-x_{k}(t)) =x(t),
\end{align}
uniformly on $[0,T]$ as $n\to\infty$.  The function $x(t)$ is well defined by this limit.  

By the uniform convergence of $x_{n}$ to $x$, we can deduce that the limit $x$ satisfies
\begin{align}
x(t) =  p_{0} e^{\int_{0}^{t}a(x(s))ds}+\int_{0}^{t}b(x(s))e^{\int_{s}^{t}a(x(s'))ds'}ds,
\end{align}
which is equivalent to 
\begin{align}
\frac{d}{dt}x(t) = a(x(t)) u(t) + b(x(t)),
\end{align}
with the initial condition $u(0)=p_{0}$.  This proved the existence of the limit $x$ starting from $x_{0}$.

Consider two different iterations $x_{n}$ and $y_{n}$ starting from two different functions $x_{0}$ and $y_{0}$ with the common initial condition: $x_{0}(0)=y_{0}(0)=p_{0}$. 
By the fact $|x_{0}(t)-y_{0}(t)|\leq 1$ and Lemma \ref{lemma:difference}, we have  
\begin{align}
|x_{1}(t)-y_{1}(t)| \leq L\int_{0}^{t}|x_{0}(s)-y_{0}(s)|ds \leq Lt.
\end{align}
Thus,  by the similar inductive arguments, we have the bound for the infinite sum:
\begin{align}
\sum_{n=0}^{\infty}|x_{n}(t)-y_{n}(t)|<e^{LT},
\end{align}
which suggests 
\begin{align}
|x_{n}(t)-y_{n}(t)|\to 0.
\end{align}
Thus,
\begin{align}
|x(t)-y(t)| \leq |x(t)-x_{n}(t)|+|x_{n}(t)-y_{n}(t)|+|y_{n}(t)-y(t)| \to 0. 
\end{align}
This shows that the limit of Picard iteration $x$ is independent of the choice of initial function $x_{0}$.

\end{document}